\newcommand{\argmin}[1]{\underset{#1}{\text{argmin}}\hspace{1.5pt}}
\newcommand{\map}[3]{#1 : #2 \longrightarrow #3}
\newcommand{\sset}[2]{\{#1 : #2\}}
\newcommand{\double}{\ \ }
\newcommand{\bL}{\bar{L}}
\newcommand{\bX}{\bar{X}}
\newcommand{\bx}{\bar{x}}
\newtheorem{theorem}{Theorem}
\newtheorem{proposition}[theorem]{Proposition}
\newtheorem{remark}{Remark}
\newcommand{\vpo}{\vspace{1ex}}
\newcommand{\Sii}{{\Longleftrightarrow}}
\newcommand{\nonu}{\nonumber}
\newcommand{\edoc}{

\captionsetup{labelfont = {bf, footnotesize}, textfont = footnotesize}
\usetikzlibrary{automata, arrows, positioning, calc, external, babel, backgrounds, matrix, shapes}
\usepgfplotslibrary{fillbetween}
\pgfplotsset{
	compat = 1.16,
	ticklabel style = {font = \footnotesize},
	every axis/.append style = {
		grid style = {dashed, gray, opacity = 0.2},
		label style = {font = \footnotesize}, 
		width = \columnwidth,
		height = 0.618 * 1 * \columnwidth
	}
}

\definecolor{britishracinggreen}{rgb}{0.0, 0.26, 0.15}
\definecolor{bostonuniversityred}{rgb}{0.8, 0.0, 0.0}
\definecolor{ceruleanblue}{rgb}{0.16, 0.32, 0.75}
\definecolor{airforceblue}{rgb}{0.36, 0.54, 0.66}
\definecolor{cadmiumgreen}{rgb}{0.0, 0.42, 0.24}
\definecolor{ao(english)}{rgb}{0.0, 0.5, 0.0}
\definecolor{coolblack}{rgb}{0.0, 0.18, 0.39}
\definecolor{byzantine}{rgb}{0.74, 0.2, 0.64}
\definecolor{alizarin}{rgb}{0.82, 0.1, 0.26}
\definecolor{arsenic}{rgb}{0.23, 0.27, 0.29}
\definecolor{cobalt}{rgb}{0.0, 0.28, 0.67}
\definecolor{amber}{rgb}{1.0, 0.75, 0.0}

\renewcommand\labelitemi{\tiny $\blacksquare$}

\title{Dynamic load balancing for cloud systems\\under heterogeneous setup delays\vspace{\baselineskip}}

\author{
	\begin{tabular}{ccc}
		\normalsize Fernando Paganini & \hspace{1cm} & \normalsize Diego Goldsztajn \\ 
		\small Universidad ORT Uruguay & \hspace{1cm} & \small Universidad ORT Uruguay \\
		\scriptsize\texttt{paganini@ort.edu.uy} & \hspace{1cm} & \scriptsize\texttt{goldsztajn@ort.edu.uy} \\
	\end{tabular}
}

\date{\vspace{\baselineskip} August 12, 2025}

\begin{document}

	
\maketitle

\noindent\rule{\textwidth}{1pt}

\vspace{2\baselineskip}

\onehalfspacing

\begin{adjustwidth}{0.8cm}{0.8cm}
	\begin{center}
		\textbf{Abstract}
	\end{center}
	
	\vspace{0.3\baselineskip}
	
	\noindent	
	We consider a distributed cloud service deployed at a set of distinct server pools. Arriving jobs are classified into heterogeneous types, in accordance with their setup times which are differentiated at each of the pools. A dispatcher for each job type controls the balance of load between pools, based on decentralized feedback. The system of rates and queues is modeled by a fluid differential equation system, and analyzed via convex optimization. A first, myopic policy is proposed, based on task delay-to-service. 
	Under a simplified dynamic fluid queue model, we prove global convergence to an equilibrium point which minimizes the mean setup time; however queueing delays are incurred with this method. A second proposal is then developed based on proximal optimization, which explicitly models the setup queue and is proved to reach an optimal equilibrium, devoid of queueing delay. Results are demonstrated through a simulation example.
	
	\vspace{\baselineskip}
	
	\small{\noindent \textit{Key words:} cloud networks, load balancing, heterogeneous delays, fluid-based control.}
	
	\vspace{0.3\baselineskip}
	
	\small{\noindent The present paper was supported by AFOSR-US under grant \#FA9550-23-1-0350 and ANII-Uruguay under fellowship PD\_NAC\_2024\_182118.} 
\end{adjustwidth}

\newpage


\section{Introduction}\label{sec.intro}

The evolution of networking and computation involves an ever-increasing parallelization. Jobs are transported through the global Internet, to be processed among millions of server units located at multiple distributed locations. As these cloud infrastructures develop, so must the algorithms required to orchestrate such scattered resources in an efficient manner. 

A core technology in this domain is \emph{load balancing}, with a long history (e.g., \cite{tantawi1985optimal}). Job requests arriving at different network locations must be \emph{dispatched} to an appropriate server, with the general objective of reducing service latency. A classical policy of general use is to route to the server with the shortest queue \cite{ephremides1980simple}; more recently, cloud computing has motivated alternate policies with a smaller burden of information exchange when scaling to a large server population (see \cite{der2022scalable} and references therein).  

A complicating factor is \emph{heterogeneity} of server resources: they could vary in  their network distance to the demand source, their access to {data} relevant for the computation, or their processing power. We will ignore the latter point and assume servers with standardized capacity, but capture the other sources of heterogeneity, manifested in \emph{setup times} needed before an actual computation may take place. Servers are deployed in \emph{pools} indexed by $j$, and tasks classified in \emph{types} $i$ based on their location or data characteristics; the setup delay will be a function of $(i,j)$. 

The control problem is as follows: a dispatcher for type $i$ must distribute the incoming rate of tasks among the pools, taking into account their congestion state and attempting to minimize the setup latency involved. Fundamentally, dispatchers are not allowed to communicate, they must make decentralized decisions based on feedback from pool queues and setup information for their own tasks. 

We employ fluid, differential equation models for task rates and queues, and rely extensively on convex optimization to design decentralized control rules for dispatchers, with provable convergence to an equilibrium which minimizes a latency cost. This methodology has proven successful to attack a variety of problems in control of networks: congestion control, routing, medium access, their cross-layer interactions, etc., see e.g. \cite{low2002internet,srikant2013communication}. 

In \cite{paganini2019optimization} we brought these fluid and optimization tools 
to study the combination of shortest-queue load balancing with certain policies for task scheduling and speed scaling at computer clusters. Our current model has the following differences with this earlier work:
\btmz
\item In \cite{paganini2019optimization}, following \cite{wang2016maptask}, heterogeneity was captured by differences in the service \emph{rate} for different data types. This is not very realistic, cloud systems tend to have standardized capacity servers. An additive setup delay more accurately represents the situation of e.g. data requirements which must be fulfilled prior to service. 
\item The service model at clusters in \cite{paganini2019optimization} implicitly assumed a \emph{processor sharing} discipline, where the entire pool capacity is applied to the current tasks, irrespective of their number. A more practical model is an individual assignment of tasks to servers, which means a pool loaded below its capacity will be underutilized. 
\etmz
Additive latencies make the current problem similar to load balancing in the transportation literature \cite{sheffi1985urban};  indeed, the first part of our results has a counterpart, with mathematical differences, with those obtained recently in \cite{paganini2025tcns} for selfish routing in  electrical vehicle charging infrastructures. 

The paper is organized as follows: in Section \ref{sec.formu} we present our initial fluid models for load balancing. In Section \ref{sec.myopic} a first load balancing strategy is proposed, where tasks are routed myopically to pools with the smallest delay-to-service; the dynamics are analyzed through a convex optimization problem and its Lagrange dual. Some limitations in modeling, and in performance, are identified at the end of the section. Sections \ref{sec.setup} and \ref{sec.prox} develop a richer model and a new control proposal, based on proximal optimization, that more accurately 
represents the problem and achieves the desired performance. A brief experimental demonstration is presented in Section \ref{sec.simulations}, and conclusions are given in Section \ref{sec.concl}. Some proofs are deferred to the Appendix.

\section{Fluid model for load balancing}\label{sec.formu}

We consider a set of server pools, indexed by $j=1,\ldots, n$. Each is identified with a service location, and has a capacity of $c_j$ servers. For simplicity, we assume each server has unit service rate in tasks/sec. 

Tasks arriving into the system are classified in \emph{types} $i=1, \ldots, m$, which may distinguish the location from where the request arises, and/or the data required to perform the task. This distinction is reflected in heterogeneous times for the \emph{setup} of a task before it can be served at the corresponding pool. In our initial model, we denote by $\tau_{ij}$ the mean setup time for a request of type $i$ in pool $j$.

We denote by $r_i$ the arrival rate of tasks of type $i$, an exogenous quantity. We assume there is a \emph{dispatcher} for each type, which makes \emph{load balancing} decisions based on information on congestion at the different pools, and the respective setup times. These decisions are captured by variables $x_{ij}(t)$, representing the rate of tasks of type $i$ that are routed to pool $j$; dependence on continuous time $t$ will be left implicit henceforth. The routing constraints to be satisfied at all times are:
\begin{align}\label{eq.xij}
	x_{ij}\geq 0, \quad \sum_{j=1}^n x_{ij} = r_i, \quad i = 1,\ldots, m.
\end{align}
We denote by $X \in \R_+^{m\times n}$ the matrix with entries $x_{ij}$. 

On the server side, we denote by $q_j(t)$ the number of tasks currently assigned to pool $j$, whether in service or awaiting service. Since the pool capacity is $c_j$, the number of active servers will be $\min(q_j,c_j)$. This coincides with the  task departure rate at pool $j$, since servers have unit service rate. Balancing the total arrival and departure rates we reach our first dynamic model for the task queues, treated as fluid (continuous) variables:
\begin{align}\label{eq.queue}
	\dot{q}_{j} =  \sum_{i=1}^m x_{ij} - \min(q_j,c_j), \quad j = 1,\ldots, n.
\end{align}

\begin{remark}\label{rem.delay}
	In eq. \eqref{eq.queue}, the input rates are applied immediately to the service queues, when in fact they should arrive after a setup time. If these times are assumed to be fixed (deterministic), a more precise description would be the delay-differential equation model
	\begin{align}\label{eq.delaydiff}
		\dot{q}_{j} =  \sum_{i=1}^m x_{ij}(t-\tau_{ij}) - \min(q_j,c_j), \quad j = 1,\ldots, n.
	\end{align}
	The results below on equilibrium apply indistinctly to both models. For dynamic convergence studies, we will use the simplified model \eqref{eq.queue}. An alternative, where tasks under setup are modeled as a separate fluid queue, will be considered in Sections \ref{sec.setup} and \ref{sec.prox}.
\end{remark}

\section{Myopic balancing based on delay to service} \label{sec.myopic}

In this section we assume that dispatch decisions are based on the \emph{delay-to-service}; this is comprised of the setup time $\tau_{ij}$ and potentially a waiting time at pool $j$, which we denote by $\mu_j$. Waiting occurs when queue occupation exceeds pool capacity; the excess tasks $[q_j-c_j]^+$ must be cleared at the overall service rate, in this case $c_j$. We therefore have the delay model 
\begin{align}\label{eq.mu}
	\mu_j(q_j) = 
	\frac{[q_j-c_j]^+}{c_j} = \left[\frac{q_j}{c_j}-1\right]^+.
\end{align}
The myopic rule for load balancing is to dispatch each task of type $i$ to the pool with shortest delay-to-service:
\begin{align}
	\label{eq.myopic}
	j^* \in \arg\min_j\{\tau_{ij}+\mu_j\},
\end{align}
with a tie-breaking rule for the case of equal delays. 
Mathematically, this results in a discontinuous dynamics, as rates $x_{ij}$ switch from one pool to another. To avoid this, preserving the essence, we will work here with a ``soft-min" version of the myopic load balancing rule. For this purpose, consider 
\begin{align}\label{eq.logsumexp}
	\varphi_\epsilon(y):= -\epsilon \log\Bigg(\sum_j e^{-y_j/\epsilon}\Bigg); \quad \epsilon >0;
\end{align}
an approximation to the minimum. Indeed (see \cite{boyd2004convex}):
\begin{align}\label{eq.softmin}
	\min(y_j)-\epsilon \log(n)\leq \varphi_\epsilon(y)\leq \min(y_j).
\end{align}
$\varphi_\epsilon(y)$ is concave, and its gradient $\nabla \varphi_\epsilon(y)=:\delta(y)$ is an element of the unit simplex $\Delta_n\subset \R_+^n$, with components 
\begin{align}\label{eq.softargmin}
	\delta_{j}(y) = \frac{e^{-\frac{y_j}{\epsilon}}}{\sum_{k=1}^n e^{-\frac{y_k}{\epsilon}}};
\end{align}
as $\epsilon \to 0+$ these fractions concentrate their unit mass on the smallest coordinates of $y$. This leads us to consider  $y^i := \tau^i + \mu = (\tau_{ij}+\mu_j)_{j=1}^n$, the vector of delays-to-service seen from dispatcher $i$, and route according to the fractions\footnote{The notation emphasizes the dependence of the fractions on $\mu$, with the $\tau_{ij}$ assumed constant.} $\delta_{ij}(\mu) := \delta_j(\tau^i+\mu)$, which approximately follow the minimal delays, yielding the rates $x_{ij} = r_i \delta_{ij}(\mu)$. 

The full dynamics is:  
\begin{subequations}\label{eq.service dynamics}
	\begin{align}
		\dot{q}_j =& \sum_{i=1}^m x_{ij} - \min(q_j,c_j),\quad j=1,\ldots, n.\label{eq.dynamics-state}\\
		\mu_j(q_j) =& \left[\frac{q_j}{c_j}-1\right]^+,  \quad j=1,\ldots, n. \label{eq.dynamics-mu} \\
		x_{ij}=&r_i \delta_{ij}(\mu) = r_i\frac{e^{-\frac{\tau_{ij}+\mu_j}{\epsilon}}}{\sum_k e^{-\frac{\tau_{ik}+\mu_k}{\epsilon}}}, \quad
		\begin{array}{c} 
			i=1,\ldots, m, \\ j=1,\ldots, n. \end{array} 
		\label{eq.dynamics-x}
	\end{align}
\end{subequations}

Substituting \eqref{eq.dynamics-mu} into \eqref{eq.dynamics-x}, and subsequently into \eqref{eq.dynamics-state} yields an ordinary differential equation in the state variable $q=(q_j)_{j=1}^n \in \R_+^n$, the positive orthant. Note that  $q_j \geqslant 0$ is always preserved by the dynamics, since the negative drift term in \eqref{eq.dynamics-state} vanishes at $q_j=0$.

It is easily seen that the nonlinear terms in \eqref{eq.dynamics-state} and \eqref{eq.dynamics-mu} are globally Lipschitz over the entire domain. The same condition holds for \eqref{eq.dynamics-x} for fixed $\epsilon$, invoking a property of the soft-min: its Hessian $\nabla^2 \varphi_\epsilon(y) $ is a matrix with induced norm (maximum singular value) bounded by $1/\sqrt{\epsilon}$. Therefore, our dynamics \eqref{eq.service dynamics} has a globally Lipschitz field. As a consequence, given an initial condition $q(0)$, solutions to  \eqref{eq.service dynamics} exist, are unique, and defined for all times. 

\begin{remark}
	A similar dynamic model was analyzed in \cite{paganini2025tcns} for a very different application: selfish routing in EV charging infrastructures. Mathematically, the difference lies in the models for queue departure rates \eqref{eq.dynamics-state} and waiting times \eqref{eq.dynamics-mu}. Here, departures are contingent on service completion, as is natural in cloud systems; the dynamics in \cite{paganini2025tcns} considered sojourn-time based departures. The current departure model was considered in \cite{PagFAllerton23}, in its switching version, but only partially analyzed. This section provides a full analysis based on the soft-min approximation to the dynamics. 
\end{remark}	

\subsection{Optimization based analysis} \label{ssec.opt}

Consider the following optimization problem in the variables $X=(x_{ij})$:
\begin{subequations}\label{eq.opt}
	\begin{align}
		\min \sum_{i,j} \tau_{ij}x_{ij} &+ \epsilon \sum_{i,j}x_{ij}\log\left(\frac{x_{ij}}{r_i}\right) \label{eq.costfc}
		\\		\mbox{subject to:\quad}   x_{ij}&\geq 0  \ \forall i,j; 
		\quad \sum_j x_{ij}= r_i \ \forall i; \label{eq.demand constr}\\ 
		\sum_i x_{ij} &\leq c_j \ \ \forall j. \label{eq.supply constr}
	\end{align}
\end{subequations}

The constraints are natural: \eqref{eq.demand constr} is a restatement of \eqref{eq.xij}, and \eqref{eq.supply constr} says that no pool should be loaded  beyond its service capacity.

Regarding cost, since $x_{ij}$ is the rate in tasks/sec assigned by the dispatcher $i$ to the pool $j$, and $\tau_{ij}$ the setup time per task, then $\sum_{i,j}\tau_{ij}x_{ij}$ represents the number of tasks in the process of setup; a natural efficiency objective is to minimize this overhead. The second perturbation cost term (for small $\epsilon>0$) is a smoothing penalty, which makes the objective strictly convex. It may be expressed as: 
\begin{align}
	\label{eq.negentropycost} 
	\epsilon \sum_{i,j} r_i \delta_{ij} \log(\delta_{ij})
	= \epsilon\sum_i r_i \mathcal{H}(\delta^i),
\end{align}
where the negative entropy $\mathcal{H}(\delta^i) = \sum_j  \delta_{ij} \log(\delta_{ij})$
favors uniform routing among pools. 

We now characterize the feasibility conditions for our problem.

\begin{proposition}\label{prop.feas}
	The optimization problem \eqref{eq.opt} is feasible if and only if 	$\sum_i r_i \leq \sum_j c_j$. In that case, there is a unique optimum $X^*$. 
\end{proposition}
\begin{proof}
	Necessity of the condition follows immediately by adding constraints \eqref{eq.demand constr} over $i$, and invoking \eqref{eq.supply constr}:
	\[
	\sum_i r_i =\sum_{i,j} x_{ij} = \sum_j \sum_i x_{ij} \leq      
	\sum_j c_j. 
	\]
	For the converse, note that $X$ defined by
	\begin{equation}
		\label{eq.feasible x}
		x_{ij} = \frac{c_j}{\sum_k c_k} r_i
	\end{equation}
	always satisfies the constraints of \eqref{eq.opt} if $\sum_i r_i \leq \sum_j c_j$.
	
	Assuming feasibility, and noting that the cost function \eqref{eq.costfc} is strictly convex due to the entropy terms, we have a unique optimum $X^*$. 
\end{proof}

We will show that \eqref{eq.opt} characterizes the equilibrium of the dynamics \eqref{eq.service dynamics}. Write first the Lagrangian of the problem with respect to the constraints 
\eqref{eq.supply constr}, with multipliers $\mu_j \geq 0$:
\begin{equation}\label{eq.lagrangian}
	\begin{split}
		L(X,\mu) &= \sum_{i,j} \left[(\tau_{ij}+\mu_j)x_{ij} + \epsilon 
		x_{ij}\log\left(\frac{x_{ij}}{r_i}\right)\right] -\sum_j c_j \mu_j;
	\end{split}
\end{equation}
now minimize over $X$, under constraints \eqref{eq.demand constr} to find the dual function. We note that this minimization decouples over $i$; using the change of variables $x_{ij}=r_i\delta_{ij}$ it amounts to minimizing
\[
\sum_{j}(\tau_{ij}+\mu_j) \delta_{ij} + \epsilon \mathcal{H}(\delta^i)
\]
for each $i$, over $\delta^i = (\delta_{ij})_{j=1}^n \in \Delta_n$, the unit simplex. It is a standard exercise to show that the minimizer is $\delta(\tau^i+\mu)$ as in  \eqref{eq.dynamics-x}, and gives the minimum
\[
\varphi_\epsilon(\tau^{i}+\mu) =  -\epsilon\log\Bigg(\sum_j e^{-(\tau_{ij}+\mu_j)/\epsilon}\Bigg).
\]
Therefore the dual function is
\begin{align}\label{eq.dual}
	D(\mu) = \sum_i r_i \varphi_\epsilon(\tau^{i}+\mu)-\sum_j c_j \mu_j.
\end{align}

We now state our first main result:

\begin{theorem}\label{teo.service-eq} 
	Assume that $\sum_i{r_i} \leq \sum_j c_j$. The following are equivalent:
	\btmz
	\item[(i)] $(q^*,X^*,\mu^*)$ is an equilibrium point of 
	\eqref{eq.service dynamics}.   
	\item[(ii)] $(X^*,\mu^*)$ is a saddle point of the Lagrangian in \eqref{eq.lagrangian},  and $q^*$ is given by:
	\begin{align}\label{eq.defqstar}
		q^*_j = \sum_{i}x^*_{ij}+c_j\mu_j^* = \begin{cases} 
			c_j \left(1+ \mu_j^*\right)  & \mbox{ if }\mu^*_j>0; \\
			\sum_ix_{ij}^* & \mbox{ if }\mu^*_j=0.
		\end{cases}
	\end{align}
	\etmz
\end{theorem}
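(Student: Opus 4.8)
The plan is to unpack the saddle-point condition in (ii) into its two constituent parts and match each against the equilibrium equations of \eqref{eq.service dynamics}. By definition, $(X^*,\mu^*)$ is a saddle point of \eqref{eq.lagrangian} precisely when $X^*$ minimizes $L(\cdot,\mu^*)$ over the demand-feasible set \eqref{eq.demand constr} and, simultaneously, $\mu^*$ maximizes $L(X^*,\cdot)$ over $\mu\geq 0$. The first condition was already computed in the derivation of the dual: its unique minimizer is $x^*_{ij}=r_i\delta_{ij}(\mu^*)$, which is exactly the algebraic relation \eqref{eq.dynamics-x}. Thus the $X$-part of the saddle point and the $x$-equation of the dynamics are literally the same statement, and I can treat $x^*_{ij}=r_i\delta_{ij}(\mu^*)$ as common ground for both implications.

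For the $\mu$-part, observe that, holding $X^*$ fixed, $L(X^*,\mu)$ is affine in $\mu$: up to the constant $\sum_{i,j}[\tau_{ij}x^*_{ij}+\epsilon x^*_{ij}\log(x^*_{ij}/r_i)]$ it equals $\sum_j \mu_j\big(\sum_i x^*_{ij}-c_j\big)$. Maximizing an affine function over the orthant $\mu\geq 0$ is finite if and only if every coefficient is nonpositive, and the maximizer then satisfies complementary slackness. Hence the $\mu$-part of the saddle-point condition is equivalent to the pair
\[
\sum_i x^*_{ij}\leq c_j \quad\text{and}\quad \mu_j^*\Big(\sum_i x^*_{ij}-c_j\Big)=0,\qquad j=1,\dots,n,
\]
that is, primal feasibility of \eqref{eq.supply constr} together with complementary slackness. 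This is precisely the KKT characterization of a maximizer of the dual $D$ in \eqref{eq.dual}, since $\partial D/\partial\mu_j=\sum_i r_i\delta_{ij}(\mu)-c_j=\sum_i x_{ij}-c_j$.

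It remains to reconcile these two relations with the equilibrium conditions $\dot q_j=0$, namely $\sum_i x^*_{ij}=\min(q^*_j,c_j)$, and the waiting-time law $\mu^*_j=[q^*_j/c_j-1]^+$. For (i) $\Rightarrow$ (ii): from $\mu^*_j=[q^*_j/c_j-1]^+$ we get $c_j\mu^*_j=[q^*_j-c_j]^+$, and adding this to $\sum_i x^*_{ij}=\min(q^*_j,c_j)$ and using the elementary identity $\min(a,c)+[a-c]^+=a$ yields at once $q^*_j=\sum_i x^*_{ij}+c_j\mu^*_j$, which is \eqref{eq.defqstar}; the two cases $\mu^*_j>0$ and $\mu^*_j=0$ then read off the supply-feasibility and complementary-slackness relations above, establishing the saddle point. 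For (ii) $\Rightarrow$ (i): given the saddle point and $q^*$ defined by \eqref{eq.defqstar}, a short case split on whether $\mu^*_j>0$ (where \eqref{eq.defqstar} gives $q^*_j=c_j(1+\mu^*_j)>c_j$) or $\mu^*_j=0$ (where complementary slackness forces $\sum_i x^*_{ij}\le c_j$, hence $q^*_j=\sum_i x^*_{ij}\le c_j$) shows that $\mu^*_j=[q^*_j/c_j-1]^+$ holds and that $\min(q^*_j,c_j)=\sum_i x^*_{ij}$, i.e. $\dot q_j=0$.

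The proof is essentially a dictionary between the two descriptions, and I do not anticipate a genuine obstacle; the only point demanding care is the $\mu$-maximization step, where one must argue that the affine-in-$\mu$ structure forces exactly primal feasibility plus complementary slackness (so that a finite maximizer exists and is characterized correctly), rather than merely stationarity as in an interior problem. The identity $\min(a,c)+[a-c]^+=a$ is what makes the piecewise formula \eqref{eq.defqstar} collapse into the single clean expression $q^*_j=\sum_i x^*_{ij}+c_j\mu^*_j$ and drive both implications simultaneously.
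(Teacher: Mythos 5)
Your proof is correct and follows essentially the same route as the paper: both decompose the saddle point into the $X$-minimization step (identified with \eqref{eq.dynamics-x}), supply feasibility, and complementary slackness, and then run the identical case split on $\mu^*_j>0$ versus $\mu^*_j=0$ in each direction. One cosmetic slip: in the (ii) $\Rightarrow$ (i) case $\mu^*_j=0$, the bound $\sum_i x^*_{ij}\leq c_j$ comes from primal (supply) feasibility rather than complementary slackness, but since that fact is part of the saddle-point characterization you derived, nothing breaks.
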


\vpo

\begin{proof}
	Starting with (i), we verify the saddle point conditions. 
	In the preceding discussion leading up to the dual function, we already established that
	\begin{align}\label{eq.saddle X}
		X^* \in \arg\min_X L(X,\mu^*) \ \Sii \ & x^*_{ij} = r_i \delta_{ij}(\mu^*), \quad \mbox{ with }\delta_{ij}(\mu) \mbox{ in \eqref{eq.dynamics-x}}. 
	\end{align}	
	Also, the equilibrium implies that
	\begin{align}\label{eq.feas equil}
		\sum_{i} x^*_{ij} = \min(q^*_j,c_j)\leq c_j,
	\end{align}
	so $X^*$ is primal-feasible. Dual feasibility ($\mu^* \geq 0$) follows by 
	\eqref{eq.dynamics-mu}. 
	
	For complementary slackness, note that if $\mu_j^* >0$, then \eqref{eq.dynamics-mu} yields $q_j^*=c_j(1+\mu^*_j) > c_j$. Therefore, $\min(q^*_j,c_j)= c_j$ and $\sum_{i} x^*_{ij} = c_j$ in \eqref{eq.feas equil}, the constraint is active as required. We have also established the first case of \eqref{eq.defqstar}.
	
	If $\mu_j^* = 0$, then \eqref{eq.dynamics-mu} yields $q_j^*\leq c_j$, so $\min(q^*_j,c_j)= q^*_j$ and $\sum_{i} x^*_{ij} = q^*_j$, the second case of \eqref{eq.defqstar}.
	
	Start now with (ii). We claim that at each $j$ both of the following identities hold:  
	\begin{subequations}\label{eq.two conditions}
		\begin{align}
			\mu^*_j &= \left[\frac{q_j^*}{c_j} - 1 \right]^+;  \label{eq.mustar}\\
			\sum_i x^*_{ij} &= \min(q^*_j,c_j).\label{eq.qhat}
		\end{align}
	\end{subequations}
	\btmz
	\item If $\mu_j^*>0$, then $q^*_j =c_j(1+\mu^*_j)$ from \eqref{eq.defqstar}, which gives \eqref{eq.mustar}. Also, $q^*_j > c_j$, and $\sum_i x^*_{ij} = c_j$ due to complementary slackness, which implies \eqref{eq.qhat}. 
	\item If $\mu_j^*=0$, then $q^*_j = \sum_i x^*_{ij} \leq c_j$ by \eqref{eq.defqstar} and primal feasibility. So, \eqref{eq.mustar}-\eqref{eq.qhat} are valid here as well. 
	\etmz
	
	Now \eqref{eq.mustar} implies that $(\mu^*,q^*)$ are consistent with \eqref{eq.dynamics-mu}, and \eqref{eq.qhat} implies that $(X^*, q^*)$ is an equilibrium of \eqref{eq.dynamics-state}. 
	
\end{proof}

Regarding the question of uniqueness of the equilibrium: note that $X^*$ must be unique because there is a unique primal optimal point for \eqref{eq.opt}. There is, however, a situation with non-unique $\mu^*$ (and $q^*$), now succinctly described. 

Assume an equilibrium with \emph{all} pools saturated at rate $c_j$ (this requires
$\sum_i{r_i} = \sum_j c_j$). Given a valid $\mu^*$, adding a positive constant $K$ to all 
$\mu^*_j$ does not modify the rate allocation according to \eqref{eq.dynamics-x}; if we also increase the $q^*_j$ accordingly to satisfy \eqref{eq.dynamics-mu}, then the equilibrium condition in \eqref{eq.dynamics-state} is maintained since both $\sum_i x^*_{ij}$ and 
$\min(q^*_j,c_j)$ are unchanged.  

Eliminating this border case, we can state the following:

\begin{proposition} \label{prop.unique}
	Assume that $\sum_i{r_i} < \sum_j c_j$. Then there is a unique equilibrium point 
	$(q^*,X^*,\mu^*)$ of \eqref{eq.service dynamics}.  	
\end{proposition}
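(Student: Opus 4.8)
The plan is to leverage Theorem~\ref{teo.service-eq}, which identifies equilibria with saddle points $(X^*,\mu^*)$ of the Lagrangian \eqref{eq.lagrangian} together with $q^*$ recovered from \eqref{eq.defqstar}, so that the statement reduces to showing that a saddle point exists and that $(X^*,\mu^*,q^*)$ is uniquely determined. For \emph{existence}, I would note that the strict inequality $\sum_i r_i < \sum_j c_j$ makes the feasible allocation \eqref{eq.feasible x} satisfy $\sum_i x_{ij} < c_j$ for every $j$, so Slater's condition holds for the convex program \eqref{eq.opt}; strong duality then holds and the dual optimum is attained, yielding a saddle point and hence an equilibrium. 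Uniqueness of the primal part $X^*$ is already given by Proposition~\ref{prop.feas}, so the remaining work is the uniqueness of $\mu^*$, after which $q^*$ is pinned down by \eqref{eq.defqstar}.

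For the uniqueness of $\mu^*$, I would start from the fact that any saddle point has $x_{ij}^* = r_i\,\delta_{ij}(\mu^*)$ with the fractions \eqref{eq.softargmin}, which are invariant under adding a common constant to all coordinates of $\mu$, since such a shift cancels between numerator and denominator. The key algebraic observation is that this is the \emph{only} degeneracy: inverting $\delta_{ij} = e^{-(\tau_{ij}+\mu_j)/\epsilon}/Z_i$ gives $\tau_{ij}+\mu_j = -\epsilon(\log\delta_{ij} + \log Z_i)$, so if two multipliers $\mu^*,\tilde\mu^*$ produce the same (unique) $X^*$, then $\mu^* - \tilde\mu^*$ must equal $K\mathbf{1}$ for a single scalar $K$; the dependence on the dispatcher index $i$ drops out precisely because $\mu^*-\tilde\mu^*$ does not involve $i$.

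The concluding step eliminates this constant using the strict inequality. Because $\sum_i r_i < \sum_j c_j$, not all pools can be saturated: summing $\sum_i x_{ij}^* = \min(q_j^*,c_j)\le c_j$ over $j$ returns $\sum_i r_i$ on the left, so at least one pool $j_0$ has $\sum_i x_{ij_0}^* < c_{j_0}$. For such a pool the equilibrium conditions of Theorem~\ref{teo.service-eq} force $\mu_{j_0}^* = 0$, since $\mu_{j_0}^*>0$ would give $q_{j_0}^* = c_{j_0}(1+\mu_{j_0}^*) > c_{j_0}$ and hence saturation; this holds for both $\mu^*$ and $\tilde\mu^*$, as the set of unsaturated pools is determined by the common $X^*$. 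Therefore $K = \mu_{j_0}^* - \tilde\mu_{j_0}^* = 0$, so $\mu^* = \tilde\mu^*$, and $q^*$ is then uniquely fixed through \eqref{eq.defqstar}.

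I expect the main obstacle to be the middle step, namely arguing cleanly that the soft-min routing determines $\mu^*$ only up to a single global additive constant: one must check that the apparently per-dispatcher shift ambiguity collapses to one common scalar. This is exactly the degeneracy flagged in the border-case discussion preceding the proposition, and the strict-inequality hypothesis is what allows it to be removed via the existence of an unsaturated pool.
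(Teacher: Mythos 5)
Your proof is correct, but it takes a genuinely different route from the paper's. The paper argues entirely on the dual function \eqref{eq.dual}: if $\arg\max_{\mu\geq 0} D(\mu)$ were not a singleton, concavity would give a segment $\mu^*+tv$ on which $D$ is constant; second-order information ($\nabla^2 D=\sum_i r_i\nabla^2\varphi_\epsilon(\tau^i+\mu)$, and the log-sum-exp Hessian has kernel spanned by $\mathbf{1}$) forces $v\propto\mathbf{1}$, and then the identity $D(\mu+t\mathbf{1})=D(\mu)+t\bigl(\sum_i r_i-\sum_j c_j\bigr)$ contradicts constancy because the slope is strictly negative. You instead work at the level of the saddle-point conditions: you invert the softmax formula \eqref{eq.softargmin} to show that any two multipliers compatible with the unique $X^*$ of Proposition \ref{prop.feas} differ by $K\mathbf{1}$, and you kill the constant via complementary slackness at a pool that the strict inequality guarantees is unsaturated. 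Both proofs exploit the same two structural facts --- the translation degeneracy of the log-sum-exp is exactly along $\mathbf{1}$, and $\sum_i r_i<\sum_j c_j$ removes that direction --- but your mechanism is first-order and algebraic (no Hessian computation), whereas the paper's is second-order and variational. Your version buys two things: it is more elementary, and it explicitly establishes \emph{existence} of the equilibrium (via Slater's condition, dual attainment, and Theorem \ref{teo.service-eq}), a point the paper's proof leaves implicit since it only derives a contradiction from non-uniqueness. The paper's version is more compact given that $D$ and its derivatives are already central to the convergence proof in the Appendix, and it does not require inverting the primal map, so it would survive regularizers for which \eqref{eq.dynamics-x} has no closed form. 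One small caveat, common to both arguments: deducing $\mu^*-\tilde\mu^*=K\mathbf{1}$ from the shared $X^*$ requires some $r_i>0$, an assumption implicit throughout the paper (e.g.\ the entropy term in \eqref{eq.costfc} presumes it), so it is worth stating when you invoke the inversion step.
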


\begin{proof}
	We use the characterization of Theorem \ref{teo.service-eq}. Assume that the set $\arg\max_{\mu \in \R_+^n} D(\mu)$ is not a singleton. Since the function is concave, there must be a segment $\mu^* + tv$, $0\neq v\in \R^n$, $0\leq t < \eta$ where $D(\mu^* + tv)$ is constant. 
	
	By double differentiation and noting from \eqref{eq.dual} that
	\[
	\nabla^2 D(\mu) = \sum_i r_i \nabla^2 \varphi_\epsilon(\tau^{i}+\mu), 
	\] 
	we must have $v \in \ker\nabla^2 \varphi_\epsilon(\tau^{i}+\mu^*)$ for each $i$. The 
	Hessian of a log-sum-exp function (see\cite{boyd2004convex}) has kernel spanned by  $v=\mathbf{1}$.
	Furthermore, it is easily checked  that $\varphi_\epsilon(\tau^{i}+\mu + t \mathbf{1})=
	\varphi_\epsilon(\tau^{i}+\mu) + t$. Therefore, we have
	\[
	D(\mu^*+t \mathbf{1}) = D(\mu^*)+ t\Big(\sum_i r_i -\sum_j c_j \Big),
	\]
	a contradiction, since the function was constant but the slope coefficient is negative. Thus, 
	$\mu^* = \arg\max_{\mu \in \R_+^n} D(\mu)$ is unique, and then $q^*$ satisfying \eqref{eq.defqstar} is also unique.
\end{proof} 

To complete the characterization, we establish the global attractiveness of the equilibrium. 

\begin{theorem}\label{teo.convergence}
	Assume that $\sum_i{r_i} < \sum_j c_j$. Any trajectory $(q(t),X(t),\mu(t))$ of \eqref{eq.service dynamics} converges asymptotically to the unique equilibrium point $(X^*,q^*,\mu^*)$.    
\end{theorem}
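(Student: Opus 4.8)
The plan is to use the concave dual function $D(\mu)$ from \eqref{eq.dual} as a Lyapunov function, recognizing the saturated part of the queue dynamics as a scaled gradient ascent on $D$, and then to invoke LaSalle's invariance principle together with the uniqueness established in Proposition \ref{prop.unique}.

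First I would rewrite the field. Writing $g_j(\mu) := \sum_i x_{ij} = \sum_i r_i\delta_{ij}(\mu)$, this is exactly $\partial G/\partial\mu_j$ for $G(\mu) = \sum_i r_i\varphi_\epsilon(\tau^i+\mu)$, so that $\partial D/\partial\mu_j = g_j(\mu)-c_j$ and the dynamics \eqref{eq.service dynamics} becomes $\dot q_j = g_j(\mu)-\min(q_j,c_j)$ with $\mu_j=[q_j/c_j-1]^+$. On a saturated coordinate ($q_j>c_j$) one has $\mu_j=q_j/c_j-1$, hence $\dot\mu_j = \dot q_j/c_j = (g_j-c_j)/c_j = (1/c_j)\,\partial D/\partial\mu_j$; on an unsaturated coordinate ($q_j\le c_j$) one has $\mu_j\equiv 0$. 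Differentiating $D$ along a trajectory (legitimate a.e., since $\mu(q)$ is Lipschitz and $D$ is smooth) then gives $\frac{d}{dt}D = \sum_{j:\,q_j>c_j}\frac{1}{c_j}\big(\partial D/\partial\mu_j\big)^2\ge 0$, so $D$ is non-decreasing, and it is bounded above by $D(\mu^*)=\max_{\mu\ge 0}D$.

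Next I would establish boundedness of trajectories from the strict condition $\sum_i r_i<\sum_j c_j$. Using $\varphi_\epsilon(\tau^i+\mu)\le\min_j(\tau_{ij}+\mu_j)$ from \eqref{eq.softmin} together with $\min_j\mu_j\le(\sum_j c_j\mu_j)/(\sum_j c_j)$, one obtains a bound of the form $D(\mu)\le \text{const}-\big(1-\tfrac{\sum_i r_i}{\sum_j c_j}\big)\sum_j c_j\mu_j$, so $D$ is coercive (tends to $-\infty$ as $\|\mu\|\to\infty$ on $\R_+^n$). Since $D$ is non-decreasing, $\mu(t)$ remains in a compact superlevel set; combined with $0\le q_j$, with $q_j=c_j(1+\mu_j)$ on saturated coordinates and $q_j\le c_j$ otherwise, this bounds $q(t)$.

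Finally, LaSalle's principle applies: the $\omega$-limit set $\Omega$ is nonempty, compact, invariant, and $D\equiv\text{const}$ on it, so $\frac{d}{dt}D=0$ there, forcing $g_j(\mu)=c_j$ (hence $\dot q_j=0$) at every saturated coordinate. The subtle point, which I expect to be the main obstacle, is the unsaturated coordinates: there $D$ carries no gradient information ($\mu_j$ is clamped at $0$), so the Lyapunov analysis alone does not pin down $q_j$. I would resolve this by invariance. On $\Omega$, a coordinate that is ever strictly saturated has vanishing derivative and so stays constant and strictly saturated for all time, while an unsaturated coordinate keeps $\mu_j\equiv 0$; hence $\mu\equiv\bar\mu$ is constant on $\Omega$, and each unsaturated coordinate obeys the autonomous stable relaxation $\dot q_j=g_j(\bar\mu)-q_j$. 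Because $\Omega$ is compact and invariant, trajectories in it are bounded backward in time as well, which forces this relaxation to be at rest, $q_j\equiv g_j(\bar\mu)\le c_j$. Thus every point of $\Omega$ is an equilibrium satisfying the characterization of Theorem \ref{teo.service-eq}, and by the uniqueness of Proposition \ref{prop.unique} we conclude $\Omega=\{(q^*,X^*,\mu^*)\}$, i.e. the trajectory converges to it, with $X(t)\to X^*$ following from continuity of the map $\mu\mapsto X$.
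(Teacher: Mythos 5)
Your proposal is correct, and its skeleton coincides with the paper's proof: the same Lyapunov function $V(q)=D(\mu(q))$, the same a.e.\ chain-rule computation giving $\frac{d}{dt}D=\sum_{j:q_j>c_j}(\dot q_j)^2/c_j\geq 0$, the same coercivity bound obtained from \eqref{eq.softmin} under strict feasibility to get compact invariant superlevel sets, and a LaSalle argument closed by Proposition \ref{prop.unique}. The one place where you genuinely diverge is the treatment of the $\omega$-limit set. You exploit \emph{backward} invariance: on $\Omega$, saturated coordinates are frozen, $\mu\equiv\bar\mu$ is constant, and the unsaturated coordinates obey the stable relaxation $\dot q_j=g_j(\bar\mu)-q_j$, whose backward solutions are unbounded unless at rest; compactness of $\Omega$ then forces every point of $\Omega$ to be an equilibrium, and uniqueness finishes the proof. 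The paper instead runs an \emph{auxiliary forward} trajectory $\tilde q$ launched from the limit set, shows $\mu(\tilde q(t))$ is constant there, and uses the rewritten dynamics \eqref{eq.first order q} as a linear first-order system with constant input to conclude $\tilde q$ converges to the (unique) equilibrium; this pins down the limiting Lyapunov level $\bar V=D(\mu^*)$, after which $\mu(t)\to\mu^*$ for the original trajectory (since $\mu^*$ is the unique maximizer of $D$), and a converging-input version of the same linear-system argument gives $q(t)\to q^*$. Your route is slightly more economical and standard (it needs only the classical two-sided invariance of $\omega$-limit sets, which holds here since the field is globally Lipschitz), while the paper's route avoids invoking backward invariance at the price of the extra pass through the auxiliary trajectory and the converging-input step; both are sound, and both correctly isolate the unsaturated coordinates as the point where the Lyapunov function alone carries no information.
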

Proof is given in the Appendix.

The preceding theory suggests that a myopic dispatch based on the delay-to-service criterion may be a good control strategy: it reaches rates $X^*$ which are socially optimal in the sense of minimizing the number of tasks in setup, a reasonable criterion. 
Two limitations must, however, be pointed out:
\btmz
\item Our equilibrium includes an undesirable cost of the allocation, because in addition to setup it must tolerate some \emph{waiting} at server pools, for every $j: q^*_j > c_j$, i.e., every saturated pool. Such waiting would not be inevitable for a central planner with global information, who could compute $X^*$ offline. If such entity could impose the equilibrium dispatch, pool input rates would never exceed $c_j$ and no queue buildup would arise. The 
inefficiency of the myopic solution is a result of the signal used to measure congestion, namely queueing delays. Such ``Price of Anarchy" is a common occurrence in applications such as transportation where routing is selfish \cite{roughgarden2005selfish,PagFAllerton23}, only responding to individual (per job) incentives. Here we are free to use a more convenient signal, provided that the decentralized architecture of dispatchers is preserved. One such proposal is considered in the next section.
\item As mentioned before, our dynamic model \eqref{eq.dynamics-state} involves an approximation, having neglected setup delays in the arrivals to the queues. A first observation is that the \emph{equilibrium} characterization would still apply to a model such as \eqref{eq.delaydiff}. But the convergence result is not valid, and oscillations could appear around the optimal allocation. The following section addresses this limitation as well.  
\etmz

\section{Fluid model with setup queues} \label{sec.setup}

In this section we consider an alternative dynamic model, where the dynamics of tasks under setup is explicitly considered. We denote by $z_{ij}$ the population of tasks from dispatcher $i$ which are undergoing setup for pool $j$, and express its evolution by the fluid queue:
\begin{align}\label{eq.zij}
	\dot{z}_{ij}= x_{ij} - \gamma_{ij} z_{ij}.
\end{align}
The rationale is as follows: tasks enter setup at a rate $x_{ij}$, driven by the dispatcher decisions. Each task leaves setup at a rate $\gamma_{ij}:= 1/\tau_{ij}$, inverse of the mean setup time. Thus $\gamma_{ij} z_{ij}$ denotes the rate at which setup phases  conclude and the tasks feed the queue at pool $j$; consistently, the queue dynamics must be modified to 
\begin{align}\label{eq.queue with setup}
	\dot{q}_{j} =  \sum_{i=1}^m \gamma_{ij} z_{ij} - \min(q_j,c_j), \quad j = 1,\ldots, n.
\end{align}

From a transfer function perspective, we are interposing a first-order lag
\[
\frac{\gamma_{ij}}{s+\gamma_{ij}} = \frac{1}{\tau_{ij} s + 1}
\] 
between arrival to the dispatcher and arrival to the queue, instead of the pure delay $e^{-\tau_{ij}s}$ implicit in \eqref{eq.delaydiff}. This can be justified as a simpler model for analysis, and also as a means to capture variability in the setup process: in queueing theory terms, \eqref{eq.zij} is the fluid model corresponding to an $M/M/\infty$ queue with Poisson arrival rate $x_{ij}$ and random, $\exp(\gamma_{ij})$-distributed setup times.  	

We will develop an appropriate dispatcher control for the system \eqref{eq.zij}-\eqref{eq.queue with setup}, with these conditions/objectives:
\btmz
\item Dispatcher $i$ may receive information from all pools $j$ as before, and also from the setup queue of its own tasks, but not from other dispatchers $i'\neq i$. With this information, they must control $x_{ij}$.
\item In equilibrium, we seek to achieve the same optimal allocation as before, preferably with no price of anarchy in terms of waiting queues. 
\etmz

\section{Balancing based on proximal optimization}\label{sec.prox}

In Section \ref{sec.myopic} we proposed the myopic routing rule first, and then introduced the optimization problem \eqref{eq.opt} to analyze it; the queue dynamics was strongly  connected with the dual of this convex program. 

In this section we will formulate an optimization problem first, with additional variables that are aimed at representing the setup queues. From the optimization we will derive a control law for routing that enables a strong connection with the expanded dynamics  \eqref{eq.zij}-\eqref{eq.queue with setup}.  For this purpose we will rely on a proximal regularization term, together with a primal-dual gradient dynamics, as studied in \cite{goldsztajn2021proximal}.

Introduce the following optimization problem in the variables $X = (x_{ij})$ and $Z = (z_{ij})$:
\begin{subequations}\label{pr.proximal}
	\begin{align}
		\min_{X, Z} &\double \sum_{i, j} \left[\frac{x_{ij}}{\gamma_{ij}} + \frac{(x_{ij} - \gamma_{ij}z_{ij})^2}{2\gamma_{ij}}\right] \label{pr1.proximal} \\
		\text{subject to:} &\double x_{ij} \geq 0 \double \forall i,j; \double \sum_j x_{ij}= r_i \double \forall i; \label{eq.proximal demand} \\ 
		&\double \sum_i x_{ij} \leq c_j \double \forall j. \label{eq.proximal supply}
	\end{align}
\end{subequations}
We observe that the constraints in $X$ are identical to \eqref{eq.demand constr}-\eqref{eq.supply constr}. The only change (note $\gamma_{ij}=\tau_{ij}^{-1}$) is in the second term of the objective function, where we replace the entropy regularization with a quadratic penalty involving the additional variable $Z$. Since the latter is unconstrained, clearly the penalty disappears at optimality, and the optimal cost will be reduced to the first term in \eqref{eq.costfc}. Therefore, we are aiming for the same rate allocation as in the previous section, in the limit of small $\epsilon>0$.   

\begin{remark}
	Since the constraints are identical, it follows from Proposition \ref{prop.feas} that Problem \eqref{pr.proximal} is feasible and has a finite optimum if and only if $\sum_i r_i \leq \sum_j c_j$.
\end{remark}
%

\subsection{Reduced Lagrangian}
\label{sub: reduced lagrangian}

To analyze the new optimization problem with the tools of \cite{goldsztajn2021proximal}, it is convenient to introduce the notation $\Delta_r \defequal \sset{X \in \R_+^{m \times n}}{\sum_j x_{ij} = r_i\ \forall i}$, 
and the convex indicator function $\map{\chi_{\Delta_r}}{\R^{m \times n}}{\{0, +\infty\}}$  defined by:
\begin{equation*}
	\chi_{\Delta_r}(X) = 0 \double \text{if} \double X \in \Delta_r, \double \chi_{\Delta_r}(X) = +\infty \double \text{if} \double X \notin \Delta_r;
\end{equation*}
adding this function to the cost in \eqref{pr1.proximal} automatically enforces constraints  \eqref{eq.proximal demand}; we treat the remaining constraints 
\eqref{eq.proximal supply} by Lagrange duality. The Lagrangian is:
\begin{equation*}
	\begin{split}
		L(X, Z, \nu) &\defequal \sum_{i, j} \left[\frac{x_{ij}}{\gamma_{ij}} + \frac{\left(x_{ij} - \gamma_{ij}z_{ij}\right)^2}{2\gamma_{ij}}\right] + \sum_j \nu_j \left(\sum_i x_{ij} - c_j\right) + \chi_{\Delta_r}(X),
	\end{split}
\end{equation*}
defined in $\R^{m \times n} \times \R^{m \times n} \times \R^n$. The strategy of 
\cite{goldsztajn2021proximal} for analysis is to compute the \emph{reduced } Lagrangian which results from partial minimization over $X$, namely: 
\begin{equation*}
	\bL(Z, \nu) \defequal \min_X L(X, Z, \nu), \double (Z, \nu) \in \R^{m \times n} \times \R^n,
\end{equation*}
achieved at 
\begin{equation}\label{eq.xbar}
	\bX(Z, \nu) \defequal \argmin{X} L(X, Z, \nu), \double (Z, \nu) \in \R^{m \times n} \times \R^n.
\end{equation}
We note that $\bX(Z, \nu)$ is well defined due to the strong convexity of $L(X, Z, \nu)$  in $X$. 
%

\begin{remark} \label{rem.tacprox}
	The proximal term used in \cite{goldsztajn2021proximal} is the square norm of $X-Z$, which would correspond to setting $\gamma_{ij} = 1$ in the second term of \eqref{pr1.proximal}. The present generalization, where a weight is included, requires a minor adaptation of the theory in \cite{goldsztajn2021proximal}. We omit the details. 
\end{remark}

$\bL(Z, \nu)$ is convex in $Z$ and concave in $\nu$, and it follows from similar arguments as in \cite{goldsztajn2021proximal} that  $\bX$ is locally Lipschitz. Moreover, $(\hat{X}, \hat{Z}, \hat{\nu})$ is a saddle point of $L$, i.e.,
\begin{equation*}
	L(\hat{X}, \hat{Z}, \nu) \leq L(\hat{X}, \hat{Z}, \hat{\nu}) \leq L(X, Z, \hat{\nu})
\end{equation*}
if and only if $(\hat{Z}, \hat{\nu})$ is a saddle point of $\bL$, i.e.,
\begin{equation*}
	\bL(\hat{Z}, \nu) \leq \bL(\hat{Z}, \hat{\nu}) \leq \bL(Z, \hat{\nu}), \double \text{and} \double \bX(\hat{Z}, \hat{\nu}) = \hat{Z}.
\end{equation*}

The latter properties imply that $\bL$ can be used to solve \eqref{pr.proximal} instead of $L$, with the following advantage:


\begin{proposition}
	\label{prop.gradients}
	$\bL$ is differentiable, with
	\begin{subequations}
		\label{eq.gradients}
		\begin{align}
			&\frac{\partial \bL}{\partial z_{ij}}(Z, \nu) = \gamma_{ij}z_{ij} - \bx_{ij}(Z, \nu); \label{eq1.gradients} \\
			&\frac{\partial \bL}{\partial \nu_j}(Z, \nu) = \sum_i \bx_{ij}(Z, \nu) - c_j; \label{eq2.gradients}
		\end{align}
	\end{subequations}
	where we use the notation $\bX = (\bx_{ij})$.
\end{proposition}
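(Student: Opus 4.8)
The plan is to recognize this as an instance of the envelope (Danskin) theorem for a parametric minimization, and then to read off the two partials by a direct computation. First I would absorb the indicator into the domain, writing $\bL(Z,\nu) = \min_{X \in \Delta_r} \tilde{L}(X, Z, \nu)$, where $\tilde{L}$ denotes the smooth part of the Lagrangian (the sum in \eqref{pr1.proximal} together with the coupling term $\sum_j \nu_j(\sum_i x_{ij} - c_j)$), since $\chi_{\Delta_r}$ equals $+\infty$ off $\Delta_r$ and $0$ on it. The structural facts that make this work are: the feasible set $\Delta_r$ is compact, convex, and \emph{independent} of the parameters $(Z,\nu)$; $\tilde{L}$ is jointly $C^\infty$; and $\tilde{L}(\cdot, Z, \nu)$ is strongly convex in $X$, each term $(x_{ij} - \gamma_{ij} z_{ij})^2/2\gamma_{ij}$ contributing strong convexity with modulus $1/\gamma_{ij}$. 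Strong convexity gives the uniqueness of $\bX(Z,\nu)$ noted after \eqref{eq.xbar}, and we may use the local Lipschitz continuity of $\bX$ recalled above.

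Next I would invoke the envelope theorem. Under the stated hypotheses (joint continuity of $\tilde{L}$ and of its parameter-gradient, compactness of $\Delta_r$, and a unique minimizer) the value function $\bL$ is differentiable, with gradient equal to the parameter-gradient of $\tilde{L}$ frozen at the optimizer:
\[
\nabla_{(Z,\nu)} \bL(Z,\nu) = \nabla_{(Z,\nu)} \tilde{L}(X, Z, \nu)\big|_{X = \bX(Z,\nu)}.
\]
The single point requiring care, which I expect to be the main obstacle, is justifying that the variation of the moving minimizer contributes nothing. Differentiating $\bL(Z,\nu) = \tilde{L}(\bX(Z,\nu), Z, \nu)$ formally by the chain rule produces an extra term $\inner{\nabla_X \tilde{L}(\bX, Z, \nu)}{\partial_{(Z,\nu)} \bX}$; this vanishes by first-order optimality for the constrained minimum, since $\bX(Z,\nu) \in \Delta_r$ for every parameter value forces $\partial_{(Z,\nu)} \bX$ to be a feasible direction, against which the variational inequality $\inner{\nabla_X \tilde{L}(\bX, Z, \nu)}{X' - \bX} \geq 0$ for all $X' \in \Delta_r$ makes $\nabla_X \tilde{L}(\bX, Z, \nu)$ inactive. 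Uniqueness of $\bX$ is precisely what upgrades Danskin's directional-derivative formula to a genuine gradient. Alternatively, one may cite the corresponding statement from \cite{goldsztajn2021proximal}, adapted to the weighted proximal term as in Remark \ref{rem.tacprox}.

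Finally I would evaluate the two partials explicitly. The only $z_{ij}$-dependence of $\tilde{L}$ enters through $(x_{ij} - \gamma_{ij} z_{ij})^2/2\gamma_{ij}$, whence
\[
\frac{\partial \tilde{L}}{\partial z_{ij}} = -(x_{ij} - \gamma_{ij} z_{ij}) = \gamma_{ij} z_{ij} - x_{ij},
\]
and the only $\nu_j$-dependence is the linear coupling term, giving $\partial \tilde{L}/\partial \nu_j = \sum_i x_{ij} - c_j$. Substituting $X = \bX(Z,\nu)$ into these two expressions yields exactly \eqref{eq1.gradients} and \eqref{eq2.gradients}, which completes the proof.
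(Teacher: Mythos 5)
Your proof is correct, but it takes a genuinely different route from the paper, which in fact states Proposition \ref{prop.gradients} with no proof at all: the paper defers to the machinery of \cite{goldsztajn2021proximal} (adapted to the weighted proximal term, as flagged in Remark \ref{rem.tacprox}), where differentiability of the reduced Lagrangian is obtained from Moreau-envelope/proximal-operator theory. Your argument is instead self-contained: you verify the hypotheses of Danskin's envelope theorem directly ($\Delta_r$ compact, convex, and parameter-independent; $\tilde{L}$ polynomial hence jointly smooth; strong convexity in $X$ giving the unique minimizer $\bX(Z,\nu)$, which is what upgrades the directional-derivative formula to a true gradient), and the final computation of the two partials is exactly right. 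What your approach buys is a proof readable without consulting and adapting the reference; what the paper's approach buys is that the same cited machinery simultaneously delivers the neighboring facts (convexity--concavity of $\bL$, local Lipschitz continuity of $\bX$, the saddle-point correspondence) in one package. One caveat: your chain-rule justification for why the moving minimizer contributes nothing is only heuristic. Since $\bX$ is merely locally Lipschitz, $\partial_{(Z,\nu)}\bX$ need not exist at every point, and the variational inequality $\inner{\nabla_X \tilde{L}(\bX,Z,\nu)}{X'-\bX} \geq 0$ gives a one-sided bound that becomes an equality only where a two-sided derivative of $\bX$ exists. The rigorous mechanism inside Danskin's theorem is the two-sided sandwich
\begin{equation*}
	\tilde{L}(\bX(Z',\nu'), Z', \nu') - \tilde{L}(\bX(Z',\nu'), Z, \nu) \;\leq\; \bL(Z',\nu') - \bL(Z,\nu) \;\leq\; \tilde{L}(\bX(Z,\nu), Z', \nu') - \tilde{L}(\bX(Z,\nu), Z, \nu),
\end{equation*}
combined with continuity of $\bX$, which never differentiates the minimizer. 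Because you invoke Danskin's theorem (and the citation fallback) as the actual tool rather than the chain rule, the proof stands; just present that paragraph as intuition, not as the argument.
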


\subsection{Saddle-point gradient dynamics}
\label{sub: saddle-point gradient dynamics}

A classical strategy \cite{arrow1958studies} to find a saddle point of a convex-concave function is the gradient dynamics:
\begin{subequations}
	\label{eq.saddle}
	\begin{align}
		&\dot{z}_{ij} = -\frac{\partial \bL}{\partial z_{ij}}(Z, \nu) = \bx_{ij}(Z, \nu) - \gamma_{ij}z_{ij}, \label{eq1.saddle} \\
		&\dot{\nu}_j = \left[\frac{\partial \bL}{\partial \nu_j}(Z, \nu)\right]_{\nu_j}^+ = \left[\sum_i \bx_{ij}(Z, \nu) - c_j\right]_{\nu_j}^+. \label{eq2.saddle}
	\end{align}
\end{subequations}
Here $[\alpha]_\beta^+ \defequal \alpha$ if $\alpha > 0$ or $\beta > 0$, and $[\alpha]_\beta^+ = 0$ otherwise, ensuring that the multipliers $\nu_j$ remain non-negative. 

It is easy to check that $(Z^*, \nu^*)$ is an equilibrium point of \eqref{eq.saddle} if and only if $(Z^*, \nu^*)$ is a saddle point of $\bL$. We will assume that such a saddle point exists, which is always the case when problem \eqref{pr.proximal} is strictly feasible.
\begin{remark}
	\label{rem.strict feasibility}
	It is straightforward to check that
	\begin{equation*}
		\sum_i r_i < \sum_j c_j, \double r_i > 0 \double \forall i, \double c_j > 0 \double \forall j
	\end{equation*}
	imply that $X$ as in \eqref{eq.feasible x} is strictly feasible for \eqref{pr.proximal}.
\end{remark}

The following results concern the global stability of the dynamics \eqref{eq.saddle} and follow from \cite{goldsztajn2021proximal}, again with small variations as mentioned in Remark \ref{rem.tacprox}:
\begin{itemize}
	\item If $(\hat{Z}, \hat{\nu})$ is any saddle point of $\bL$, then
	\begin{equation*}
		V(Z, \nu) \defequal \frac{1}{2}\norm{Z - \hat{Z}}_2^2 + \frac{1}{2}\norm{\nu - \hat{\nu}}_2^2
	\end{equation*}
	is nonincreasing along the trajectories of \eqref{eq.saddle}, i.e.,
	\begin{equation*}
		\dot{V} = \nabla V \left(\nabla \bL\right)^\intercal \leq 0.
	\end{equation*}
	
	\item Each solution of \eqref{eq.saddle} converges to an equilibrium point of \eqref{eq.saddle} contained in $\sset{(Z, \nu)}{\dot{V}(Z, \nu) = 0}$; recall that all equilibria are saddle points of $\bL$ and thus solve \eqref{pr.proximal}.
\end{itemize}

\subsection{Proximal routing rule}

The saddle-point dynamics \eqref{eq.saddle} was drawn from the theoretical work in \cite{goldsztajn2021proximal}, a priori not related to the load balancing dynamics. We observe, however, that \eqref{eq1.saddle} is identical to \eqref{eq.zij}, provided that we use the rule $\bar{x}_{ij}(Z,\nu)$ as a criterion for routing. We verify that this rule satisfies our information requirements. The minimization  
in \eqref{eq.xbar} involves the constraints $X\in \Delta_r$, and the cost terms
\begin{equation}\label{eq.proximal routing}
	\sum_i \sum_j\left[\left(\frac{1}{\gamma_{ij}} + \nu_j\right)x_{ij} + \frac{\left(x_{ij} - \gamma_{ij}z_{ij}\right)^2}{2\gamma_{ij}}\right].
\end{equation}
This indeed decouples over the dispatchers. For each $i$ it amounts to a quadratic program, minimizing its portion of the cost \eqref{eq.proximal routing} subject to the constraints \eqref{eq.proximal demand}. In fact, an equivalent finite-step algorithm which projects on the simplex can be given, similarly to what is done in \cite{goldsztajn2019proximal}. In terms of information, dispatcher $i$ must know the state of multipliers $\nu_j$, and setup-queue lengths $z_{ij}$ for type $i$ jobs. 

\subsection{Multipliers via virtual queues}

In contrast with the setup queue, there is \emph{not} a direct coincidence between the dynamics \eqref{eq.queue with setup} of the pool queue and the respective multiplier dynamics in  \eqref{eq2.saddle}. The latter behaves as a \emph{virtual} queue which is:
\begin{itemize}
	\item Incremented when any dispatcher $i$ assigns a new task to the pool, without waiting for the setup time to occur;
	\item Decremented at the full rate $c_j$, irrespective of the number of jobs present in the pool. 
\end{itemize}
This scheme is more involved than the myopic one in Section \ref{sec.myopic}, but note that updates may be carried out by each pool without the need for any global information. 

Running this kind of dynamics would ensure an equilibrium in which optimal assignment rates $x^*_{ij}$ are reached asymptotically. However, since queues $q_j$ are not explicitly present in the feedback loop, their limiting behavior is not immediate. In particular, the queueing delay inefficiency pointed out in the previous section is not easily ascertained, it would in general depend on initial conditions. 

A solution is to slightly modify the problem to guarantee the  desired equilibrium properties for $q_j$. This may be done by replacing $c_j$ in \eqref{eq2.saddle} with tunable parameters $\tilde{c}_j < c_j$. These must be chosen so $\sum_i r_i \leq \sum_j \tilde{c}_j$, and thus Problem \eqref{pr.proximal} remains feasible with a slightly tightened feasibility region. Assuming the change is small, the new solution will have almost optimal cost. 

Based on this idea, consider the following combined dynamics: 
\begin{subequations}
	\label{eq.queues and saddle}
	\begin{align}
		&\dot{q}_j = \sum_i \gamma_{ij}z_{ij} - \min\left(q_j, c_j\right), \label{eq1.queues and saddle} \\
		&\dot{z}_{ij} = \bx_{ij}(Z, \nu) - \gamma_{ij}z_{ij}, \label{eq2.queues and saddle} \\
		&\dot{\nu}_j = \left[\sum_i \bx_{ij}(Z, \nu) - \tilde{c}_j\right]_{\nu_j}^+. \label{eq3.queues and saddle} 
	\end{align}
\end{subequations}
We analyze the properties of its solution.

\begin{theorem}
	\label{the.equilibria of proximal dynamics}
	Suppose that
	\begin{equation*}
		\sum_i r_i < \sum_j \tilde{c}_j, \double r_i > 0 \double \forall i, \double c_j > \tilde{c}_j > 0 \double \forall j. 
	\end{equation*}
	Then, solutions of \eqref{eq.queues and saddle} converge to an equilibrium point $(q^*, Z^*, \nu^*)$ such that: 
	\begin{enumerate}
		\item[(a)] $(\bX(Z^*, \nu^*), Z^*, \nu^*)$ solves \eqref{pr.proximal} with $\tilde{c}_j$ replacing $c_j$,
		
		\item[(b)] $q_j^* = \sum_i \gamma_{ij} z_{ij}^* < c_j$ for all $j$. 
	\end{enumerate}
\end{theorem}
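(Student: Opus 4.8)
The plan is to exploit the cascade structure of \eqref{eq.queues and saddle}: equations \eqref{eq2.queues and saddle}--\eqref{eq3.queues and saddle} in $(Z,\nu)$ form an autonomous subsystem that drives the scalar queue equation \eqref{eq1.queues and saddle}. I would first observe that the term $-\sum_j \nu_j c_j$ in the Lagrangian is constant in $X$, so the minimizer $\bX(Z,\nu)$ of \eqref{eq.xbar} does not depend on the capacities; consequently \eqref{eq2.queues and saddle}--\eqref{eq3.queues and saddle} are \emph{exactly} the saddle-point gradient dynamics \eqref{eq.saddle} associated with Problem \eqref{pr.proximal} when $\tilde{c}_j$ replaces $c_j$. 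Under the stated hypotheses ($\sum_i r_i < \sum_j \tilde{c}_j$, $r_i>0$, $\tilde{c}_j>0$), the reasoning of Remark \ref{rem.strict feasibility} applied with $\tilde{c}_j$ guarantees strict feasibility of this tightened problem, hence the existence of a saddle point. The convergence results quoted from \cite{goldsztajn2021proximal} then give that $(Z(t),\nu(t))$ converges to \emph{some} equilibrium point $(Z^*,\nu^*)$, which is a saddle point of the corresponding reduced Lagrangian; this is precisely part (a), since $\bX(Z^*,\nu^*)$ together with $(Z^*,\nu^*)$ solves the tightened problem.

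For part (b), I would use the equilibrium relation of \eqref{eq2.queues and saddle}, namely $\gamma_{ij}z_{ij}^* = \bx_{ij}(Z^*,\nu^*)$, to express the steady input to pool $j$ as $u_j^* \defequal \sum_i \gamma_{ij} z_{ij}^* = \sum_i \bx_{ij}(Z^*,\nu^*)$. Because $(\bX(Z^*,\nu^*),Z^*,\nu^*)$ is feasible for the tightened problem, the supply constraint yields $u_j^* \leq \tilde{c}_j < c_j$. This strict separation is the crucial ingredient, placing the pool input strictly below capacity at equilibrium, and it is exactly what the tightening $\tilde{c}_j < c_j$ buys us.

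It remains to prove that $q_j(t)$ converges, and to the value $u_j^*$. I would treat \eqref{eq1.queues and saddle} as a scalar ODE $\dot{q}_j = u_j(t) - \min(q_j,c_j)$ driven by the input $u_j(t) = \sum_i \gamma_{ij} z_{ij}(t)$, which tends to $u_j^* < c_j$ by the previous step. Fixing $\delta>0$ small enough that $u_j^*+\delta < c_j$, there is a time $T$ beyond which $|u_j(t)-u_j^*|<\delta$. Since the field is increasing in the input and nonincreasing in $q_j$, for $t\geq T$ the scalar comparison principle sandwiches $q_j(t)$ between the solutions of $\dot{\bar{q}}=(u_j^*+\delta)-\min(\bar{q},c_j)$ and $\dot{\underline{q}}=(u_j^*-\delta)-\min(\underline{q},c_j)$ started at $q_j(T)$. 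Each of these autonomous scalar systems has a globally asymptotically stable equilibrium at $u_j^*\pm\delta < c_j$: above capacity the drift is bounded away from zero and points downward, and below capacity the dynamics reduces to the stable linear $\dot{q}=\text{const}-q$. Hence $\limsup_t q_j(t)\le u_j^*+\delta$ and $\liminf_t q_j(t)\ge u_j^*-\delta$; letting $\delta\to 0$ gives $q_j(t)\to u_j^*$, so $q_j^*=\sum_i\gamma_{ij}z_{ij}^*<c_j$, which is part (b).

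The main obstacle I anticipate is this last step: rigorously establishing convergence of the queue despite the nonsmooth $\min$ and the time-varying (rather than constant) input. The comparison argument handles both issues, but it relies essentially on the strict inequality $u_j^* < c_j$ obtained from the tightened capacities; in the borderline regime $\sum_i r_i = \sum_j \tilde{c}_j$ a pool could saturate, and the clean linear-regime analysis below $c_j$ would no longer apply.
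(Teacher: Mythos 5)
Your proposal is correct and follows essentially the same route as the paper: exploit the cascade structure of \eqref{eq.queues and saddle}, invoke the saddle-point convergence results (with $\tilde{c}_j$ replacing $c_j$, justified by Remark \ref{rem.strict feasibility}) to get part (a), and use the equilibrium relation $\sum_i \gamma_{ij}z_{ij}^* = \sum_i \bx_{ij}(Z^*,\nu^*) \leq \tilde{c}_j < c_j$ as the key strict separation for part (b). The only difference is the final step: where you sandwich $q_j(t)$ between comparison solutions with constant inputs $u_j^* \pm \delta$ and let $\delta \to 0$, the paper instead observes that the input eventually stays below $\tilde{c}_j + \varepsilon_j < c_j$, so $q_j(t) \leq c_j$ for all large $t$, after which \eqref{eq1.queues and saddle} becomes the stable linear system $\dot{q}_j = \sum_i \gamma_{ij}z_{ij}(t) - q_j$ driven by a convergent input; both arguments are valid and rest on the same strict inequality bought by the tightening.
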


\begin{proof}
	First, note that Remark \ref{rem.strict feasibility} implies that \eqref{pr.proximal} is strictly feasible if $c_j$ is replaced by $\tilde{c}_j$, so the results stated in Sections \ref{sub: reduced lagrangian} and \ref{sub: saddle-point gradient dynamics} hold, replacing $c_j$ by $\tilde{c}_j$.
	
	Fix any solution $(q(t), Z(t), \nu(t))$ of \eqref{eq.queues and saddle} and observe that \eqref{eq2.queues and saddle}-\eqref{eq3.queues and saddle} are independent of \eqref{eq1.queues and saddle}. Hence, $(Z(t), \nu(t))$ converges as $t \to \infty$ to a point $(Z^*, \nu^*)$ that is an equilibrium of \eqref{eq2.queues and saddle}-\eqref{eq3.queues and saddle} and a saddle of $\bL$; this follows from the results at the end of Section \ref{sub: saddle-point gradient dynamics}. In particular, (a) holds.
	
	Because $(Z^*, \nu^*)$ is an equilibrium point of \eqref{eq2.queues and saddle}-\eqref{eq3.queues and saddle},
	\begin{equation*}
		\sum_i \gamma_{ij}z_{ij}^* = \sum_i \bx_{ij}(Z^*, \nu^*) \leq \tilde{c}_j < c_j, \double \forall j.
	\end{equation*}
	Therefore, there exist $t_0 \geq 0$ and $0 < \varepsilon_j < c_j - \tilde{c}_j$ such that
	\begin{equation*}
		\dot{q}_j(t) \leq \tilde{c}_j + \varepsilon_j - \min(q_j(t), c_j) \double \forall j, \double \forall t \geq t_0, 
	\end{equation*}
	which implies that there exists $t_1 \geq t_0$ such that $q_j(t) \leq c_j$ for all $j$ and $t \geq t_1$. Then we obtain
	\begin{equation*}
		\dot{q}_j(t) = \sum_i \gamma_{ij}z_{ij}(t) - q_j(t) \double \forall j, \double \forall t \geq t_1.
	\end{equation*}
	This equation represents a linear, first-order system driven by an input which converges as $t\to \infty$. We conclude that $q(t) \to q^*$ as $t \to \infty$ with $q^*$ given by the expression in (b).
\end{proof}

To summarize: we have found a procedure, the proximal routing rule with multipliers generated by virtual queues, that produces converging load balancing rates $\bx_{ij}(t)$.
By choosing $\tilde{c}_j$ close to $c_j$, the limit rates approximate the solution of Problem \eqref{pr.proximal}, where the number of tasks $z^*_{ij} = \gamma_{ij}^{-1}  x^*_{ij}$ undergoing setup is minimized. Moreover, (b) implies that there is no waiting at pools in equilibrium: all the tasks that have finished their setup are immediately served. 

We have thus addressed both limitations pointed out at the end of Section \ref{sec.myopic}.

\section{Numerical experiments}
\label{sec.simulations}
We present simulation experiments that illustrate the results of Sections \ref{sec.myopic} and \ref{sec.prox}. We consider a small setup with $m = n = 2$ and the following parameters:
\begin{equation*}
	c = \left[\begin{matrix}
		15 \\
		10
	\end{matrix}\right], \double
	r = \left[\begin{matrix}
		16 \\
		8
	\end{matrix}\right], \double
	\tau = \left[\begin{matrix}
		1 & 2 \\
		2 & 1
	\end{matrix}\right];
\end{equation*}
recall that $\gamma_{ij} = 1 / \tau_{ij}$. If setup times were the only factor, we would send all tasks of type $i$ to server pool $j = i$. However, this violates the capacity constraint \eqref{eq.supply constr}; if applied, the queue at pool $1$ would grow without limit.  
Hence, a load balancing rule of the kind discussed in this paper is required. 

First consider the myopic routing rule of Section \ref{sec.myopic}. For $\epsilon = 0$, the minimizer of \eqref{eq.opt} is:
\begin{equation}
	\label{eq.solution 1}
	x_{11}^* = 15, \double x_{12}^* = 1, \double x_{21}^* = 0, \double x_{22}^* = 8;
\end{equation}
it does not change appreciably if $\epsilon = 0.01$. For this $\epsilon$, we solved \eqref{eq.service dynamics} numerically when the queues are empty at time zero. The routing rates are shown in Figure \ref{fig: rates}, and the queue lengths in Figure \ref{fig: queues}. The routing rates converge to the equilibrium in \eqref{eq.solution 1}, which minimizes the number of tasks in the process of setup. However, the equilibrium queue at server pool~$1$ almost doubles its capacity; all those tasks incur a significant wait before being processed.

\begin{figure}
	\centering
	\begin{subfigure}{0.49\columnwidth}
		\centering
		\includegraphics[width=\columnwidth]{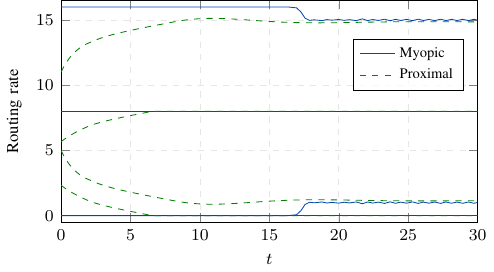}
		\subcaption{Routing rates}
		\label{fig: rates}
	\end{subfigure}
	\hfill
	\begin{subfigure}{0.49\columnwidth}
		\centering
		\includegraphics[width=\columnwidth]{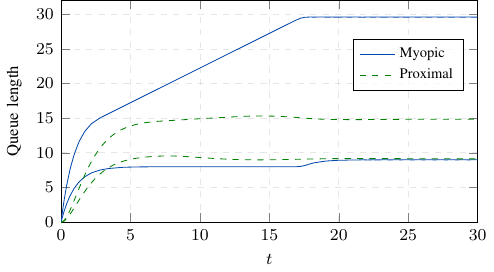}
		\subcaption{Queue lengths}
		\label{fig: queues}
	\end{subfigure}
	\caption{Evolution of the routing rates and number of tasks at each server pool for both policies. To identify the routing rates on the left panel, note that at equilibrium they satisfy $x_{11} > x_{22} > x_{12} > x_{21}$ for both policies. On the right panel, note that $q_1 > q_2$ in equilibrium for both policies.}
\end{figure}

Consider now the proximal routing rule of Section \ref{sec.prox}. For $\epsilon \approx 0$, the minimizer for \eqref{pr.proximal} is again \eqref{eq.solution 1}.  If, instead, we replace $c_j$ by $\tilde{c}_j = 0.99 c_j$, the new minimizer of \eqref{pr.proximal} is found to be:
\begin{equation*}
	x_{11}^* = 14.85, \double x_{12}^* = 1.15, \double x_{21}^* = 0, \double x_{22}^* = 8.
\end{equation*}
Also, the optimal cost increases from $25$ to $25.15$. This slight increase in the number of tasks undergoing setup is amply compensated by a reduction of queue lengths, as shown in Figures \ref{fig: rates} and~\ref{fig: queues}. In particular, $q_j < c_j$ for all $j$ in equilibrium, as proved in Theorem~\ref{the.equilibria of proximal dynamics}. This implies that, in equilibrium, all tasks are processed immediately after their setup concludes.

\section{Conclusion} \label{sec.concl}

We have studied load balancing over a set of server pools, carried out by a set of dispatchers, one per task type. The heterogeneous types are distinguished by the setup times each task requires to enter service at each pool. 

Two decentralized control strategies were proposed, based on feedback from the pool queues and 
the type-specific setup process. Both are proved to reach equilibrium rates consistent with the minimization of overall setup delay. The simpler myopic rule achieves this at the cost of queueing delay; the more expensive proximal routing rule combined with virtual queueing achieves all desired equilibrium features.


\begin{appendices}
	
\section{Proof of Theorem \ref{teo.convergence}}\label{app.conv}

\begin{proof}
Our proof of global convergence of the dynamics \eqref{eq.service dynamics} is based on using the dual function $V(q)=D(\mu(q))$ as a Lyapunov function, and a LaSalle-type argument. As such, it resembles the proof given in \cite{paganini2025tcns} for another dynamics which has a similar structure, but also relevant differences. 

We first note that  the dual function $D(\mu)$ must be upper bounded for $\mu\geq 0$ if the optimization problem is feasible; we will require a sharper bound for the case of strict feasibility. Assume $\sum_i r_i = (1-\eta) \sum_j c_j$, for some $\eta > 0.$
Invoking \eqref{eq.dual} we have:
\begin{align*}
	D(\mu )	&= \sum_i r_i \varphi_\epsilon(\tau^{i}+\mu) - \sum_j \mu_j c_j \\
	&\leq \sum_i r_i \min_j(\tau_{ij}+\mu_j) - \sum_j \mu_j c_j \\
	&\leq  \left[ \bar{\tau} +\min_j\mu_j \right]\sum_i r_i - \sum_j \mu_j c_j  	\\
	&=  \bar{\tau}\sum_i r_i  +\min_j\mu_j (1-\eta) \sum_j c_j 	- \sum_j \mu_j c_j 
	\\
	&\leq   \bar{\tau}\sum_i r_i +(1-\eta) \sum_j \mu_j c_j- \sum_j \mu_j c_j  \leq   \bar{\tau}\sum_i r_i -\eta \sum_j \mu_j c_j. 
\end{align*}
Here the first step uses \eqref{eq.softmin}, for the second step we take $\bar{\tau}$ to be  an upper bound on all setup times. A consequence of this bound is that for any $\mu_0 \geq 0$, the set 
\[
\mathcal{M}:= \{\mu\geq 0: D(\mu)\geq D(\mu_0)\}
\]
is bounded in $\R_+^n$. Given \eqref{eq.mu}, this also implies that 
\begin{align}
	\label{eq.set Q}
	\mathcal{Q}:= \{q\geq 0: D(\mu(q))\geq D(\mu_0)\}
\end{align}
is a compact set in  $\R_+^n$.

Consider now any trajectory $q(t)$ of the dynamics \eqref{eq.service dynamics}, continuously differentiable. $\mu_j(q_j)$ is not differentiable at $q_j=c_j$,  nevertheless it is Lipschitz, so  $\mu(q(t))$ is absolutely continuous: time derivatives exist almost everywhere,  and the function is an integral of its derivative. 

Let $\mathcal{T}\subset \R_+$ (with complement of Lebesgue measure zero) be the set of times for which $\dot{\mu}(t)$ exists. For such times, note from \eqref{eq.dynamics-mu} that $\dot{\mu_j}(t)=0$ when  $q_j(t)<c_j$; furthermore, for $\dot{\mu_j}(t)$ to exist simultaneously with $q_j(t)=c_j$ it is necessary that $\dot{\mu_j}(t)=0$. Finally, if $q_j(t) > c_j$, then $\dot{\mu_j} = \frac{\dot{q}_j}{c_j}$. 

We now compose $\mu(t)$ with the dual function $D(\mu)$ which is continuously differentiable, with partial derivatives:
\begin{align}
	\label{eq.partial D}
	\frac{\partial D}{\partial \mu_j} = \sum_i r_i \delta_{ij}(\mu) - c_j.
\end{align}
At points $t\in \mathcal{T}$ we can apply the chain rule to obtain 
\begin{align}
	\frac{d}{dt}{D}(\mu(q(t))) & = \sum_j\frac{\partial D}{\partial \mu_j}\dot{\mu}_j \nonu \\
	& =\sum_{j:q_j(t) > c_j} \left[\sum_i x_{ij}(t) - c_j\right] \frac{\dot{q}_j}{c_j} =  \sum_{j:q_j(t) > c_j} \frac{(\dot{q}_j)^2}{c_j}\geq 0. \label{eq.ddot}
\end{align} 
In the second step above, we incorporated \eqref{eq.dynamics-x} into \eqref{eq.partial D}, and substituted for 
$\dot{\mu}_j$ removing terms where it is zero. Finally we noticed that for the remaining terms the square bracket is precisely the right-hand side of  \eqref{eq.dynamics-state}.

By integration over time we conclude that $V(q) = D(\mu(q))$ is non-decreasing along any trajectory $q(t)$ of the dynamics \eqref{eq.service dynamics}. This implies, in turn, that 
$\mathcal{Q}$ in \eqref{eq.set Q} is a compact invariant set of the dynamics, for any $\mu_0$. 

Before proceeding further, we state a modified version of the queue dynamic equation which results from noticing that
\[
q_j = \min(q_j,c_j)+[q_j - c_j]^+ = \min(q_j,c_j)+ c_j \mu_j,
\]
where we used \eqref{eq.dynamics-mu}, and then substituting into  \eqref{eq.dynamics-state}:
\begin{align}\label{eq.first order q}
	\dot{q}_j =  - q_j + \sum_{i} x_{ij}(\mu(t))+c_j\mu_j(t).
\end{align}

Now consider an arbitrary trajectory $q(t)$ of the dynamics; we must show it converges to equilibrium. Setting $\mu_0 =\mu(q(0))$ we have $q(t) \in \mathcal{Q}$, compact; let $L^+\subset \mathcal{Q}$ be its $\Omega$-limit set, itself also invariant under the dynamics. We know that  $V(q(t))$ is bounded and monotonically non-decreasing, let its limit be $\bar{V}$. By continuity, $V(q)\equiv \bar{V}$ for any $q\in L^+$. 

Now consider an auxiliary trajectory, $\tilde{q}(t)$, with initial condition $q^+ \in L^+$. We conclude that 
\[
V(\tilde{q}(t))=D(\mu(\tilde{q}(t)))\equiv \bar{V} \ \Longrightarrow\ \frac{d}{dt}D (\mu(\tilde{q}(t)))\equiv 0.
\]
In reference to \eqref{eq.ddot}, this implies that, almost everywhere, we must have 
$\dot{\tilde{q}}_j =0$ for any $j: \tilde{q}_j(t)>c_j$, and therefore  $\dot{\tilde{\mu}}_j= 0$ for such queues. But non-congested queues ($\tilde{q}_j(t)\leq c_j$) also satisfy  $\dot{\tilde{\mu}}_j= 0$ as already mentioned, so $\dot{\tilde{\mu}}\equiv 0$ and  $\mu(\tilde{q}(t))\equiv \tilde{\mu}$, constant. Consequently, the rates $X(\tilde{\mu})$ defined by \eqref{eq.dynamics-x} are also constant. 

Using \eqref{eq.first order q} for the trajectory $\tilde{q}$, we note that this is a \emph{linear} first-order system with a {constant} input $\sum_{i} x_{ij}(\tilde{\mu})+c_j\tilde{\mu}_j$. Therefore the solutions $\tilde{q}_j(t)$
converge to this value, an equilibrium of the dynamics. However, since the  equilibrium is unique from Theorem \ref{teo.service-eq}, we conclude from this analysis of $\tilde{q}(t)$ that  necessarily $\tilde{\mu} = \mu^*$, the dual optimal price, and $\bar{V} = V^* = D(\mu^*)$.

Return now to the \emph{original} trajectory $q(t)$. We know that
$V(q(t))=D(\mu(q(t))) \to D(\mu^*)$; also note from Proposition \ref{prop.unique} that $\mu^*$ is the unique maximizer of $D(\mu)$. Therefore $\mu(t) \to \mu^*$, and consequently $x_{ij}(\mu(t)) \to x^*_{ij}$.  

Returning once again to the version \eqref{eq.first order q} of the queue dynamics, we have a linear first-order system with an input that converges to $\sum_{i}x^*_{ij}+c_j\mu_j^*$. It follows that 
\[
\lim_{t\to \infty} q_j(t) = q_j^* = \sum_{i}x^*_{ij}+c_j\mu_j^*,
\]
the equilibrium queues from \eqref{eq.defqstar}.
\end{proof}

\end{appendices}
	
\bibliographystyle{IEEEtranS}
\bibliography{lbws}
	
\end{document}